\newtheorem{thm}{Theorem}
\newtheorem{defn}{Definition}
\newtheorem{lemma}{Lemma}
\newtheorem{rk}{Remark}
\numberwithin{equation}{section} \setcounter{tocdepth}{1}
\newcommand*{\rom}[1]{\expandafter\@slowromancap\romannumeral #1@}
\begin{document}
	\title [Fixed points of an operator]
	{Fixed points of an infinite dimensional operator related to  Gibbs measures}
	
	\author {U.R. Olimov, U.A. Rozikov}
	
	\address{U.\ R. Olimov\\ V.I.Romanovskiy Institute of Mathematics,  9, Universitet str., 100174, Tashkent, Uzbekistan.}
	\email {umrbek.olimov.92@mail.ru}
	
	\address{ U.Rozikov$^{a,b,c}$\begin{itemize}
			\item[$^a$] V.I.Romanovskiy Institute of Mathematics,  9, Universitet str., 100174, Tashkent, Uzbekistan;
			\item[$^b$] AKFA University,
			264, Milliy Bog street,	Yangiobod QFY, Barkamol MFY,
			Kibray district, 111221, Tashkent region, Uzbekistan;
			\item[$^c$] National University of Uzbekistan,  4, Universitet str., 100174, Tashkent, Uzbekistan.
	\end{itemize}}
	\email{rozikovu@yandex.ru}
	
	\	\begin{abstract} We describe fixed points of an infinite dimensional non-linear operator related to a hard core (HC) model with a countable set  $\mathbb{N}$ of spin values on the Cayley tree. This operator is defined by a countable set of parameters $\lambda_{i}>0$, $a_{ij}\in \{0,1\}$, $ i,j \in \mathbb{N}$. We find a sufficient condition on these parameters under which the operator has unique fixed point. When this condition is not satisfied then we show that the operator may have up to five fixed points.  Also, we prove that every fixed point generates a normalisable boundary law and therefore defines a Gibbs measure for the given HC-model.
	\end{abstract}
	\maketitle
	
	{\bf Mathematics Subject Classifications (2010).} 82B05, 82B20
	(primary); 60K35 (secondary)
	
	{\bf{Key words.}} Fixed point, Cayley tree, Gibbs measure, HC model.
	
	\section{Introduction}
In this paper we investigate an infinite dimensional operator $F$ related to a physical system with  $\mathbb N$-valued spin variables $\sigma(x)$ located on the vertices $x$ of a ($k$-regular) Cayley tree, where each vertex has $k+ 1$ neighbors. We are interested to  fixed points of the operator $F$. 

It is known (see \cite{Z1}) that each normalisable fixed point of operator $F$ defines a Gibbs measure of the  $\mathbb N$-valued spin system. A non-normalisable fixed point does not define a Gibbs measure, but if the operator corresponds to a gradient potential on the space of gradient configurations $\sigma$, then such fixed points define some gradient Gibbs measures (for detailed motivation and recent results see \cite{BiKo}, \cite{BEvE}, \cite{FV}--\cite{K}, \cite{Ra}, \cite{Sh}, \cite{Ve}).

Theory of Gibbs measures on trees have been mostly developed for Hamiltonians with finite set of spin values (for example, the Ising model, the Potts models, and  hard-core models). For such models the translation invariant Gibbs measures can be described in terms of the roots
of polynomials depending on parameters of the model and on the order $k$ of the Cayley tree (see \cite{BR}, \cite{KRK}, \cite{Ro}, \cite{Roz} and references therein). 

In the case of $\mathbb{N}$-valued spins the investigation of (gradient) Gibbs measures is more difficult as the solutions of a corresponding equation became an infinite-dimensional vector, even for the translation invariant measures, so we can not hope for explicit solutions in the general case. There may be no solutions at all, due to non-compactness of the set $\mathbb{N}$.

 In this paper we consider models for which such solutions and corresponding Gibbs measures do exist. Moreover, we show that depending on  parameters there may be up to five translation invariant Gibbs measures.

\section{Condition of uniqueness of fixed point. }
 Denote
$$l^{1}_{+}=\left\{x=(x_{1},x_{2},\dots,x_{n},\dots) \, : \, x_{i}>0, \, \|x\|=\sum_{j=1}^{\infty}x_j<\infty\right\}.$$

To describe translation invariant Gibbs measures of hard-core (HC) models on a Cayley tree of order $k\geq 2$ one has to study fixed points of operator $F: l^{1}_{+}\to l^{1}_{+}$ defined by
$$F: \ \ x'_{i}=\lambda_{i}\left(\frac{1+\sum_{j=1}^{\infty}a_{ij}x_j}{1+\sum_{j=1}^{\infty}a_{1j}x_j}\right)^k,$$
where $k, i\in \mathbb{N}$,  $\lambda_{i}>0$, and  $a_{ij}\in{\{0,1\}}$ are given parameters.

 In this paper we are going to study fixed points of $F$, in the case when $a_{1j}=1$ for any $j\in \mathbb N$. In this case the operator takes a simpler form:

\begin{equation}\label{1}
	F: \ \ x'_{i}=\lambda_{i}\left(\frac{1+\sum_{j=1}^{\infty}a_{ij}x_j}{1+\|x\|}\right)^k,
\end{equation}
where $k, i\in \mathbb{N}$,  $\lambda_{i}>0$.

Let $\lambda=(\lambda_1,\lambda_2,\dots)\in l^{1}_{+}$.
Denote
$$A_{k}=\left\{x\in l^1_+:\lambda_1+\frac{\|\lambda\|-\lambda_1}
{(1+\|\lambda\|)^k}\leq \|x\|\leq \|\lambda \|\right\}.$$
\begin{lemma}\label{lemma1}
If $\lambda=(\lambda_1,\lambda_2,\dots)\in l^{1}_{+}$ then $A_{k}$ is an invariant with respect to operator (\ref{1}), $F:l^1_+\to l^1_+$, i.e.,  $F(A_{k})\subset A_{k}$.
\end{lemma}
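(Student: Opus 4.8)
The plan is to show that $A_k$ is invariant by controlling $\|F(x)\|$ from above and below whenever $x\in A_k$. Write $s_i=\sum_{j=1}^\infty a_{ij}x_j$, so that $0\le s_i\le \|x\|$ (since each $a_{ij}\in\{0,1\}$ and $x\in l^1_+$), and the $i$-th coordinate of $F(x)$ is
\[
x'_i=\lambda_i\left(\frac{1+s_i}{1+\|x\|}\right)^k.
\]
Summing over $i$ gives $\|F(x)\|=\sum_{i=1}^\infty \lambda_i\bigl(\tfrac{1+s_i}{1+\|x\|}\bigr)^k$, and the whole argument reduces to bounding this scalar sum. The first substantive point is that the map is well defined, i.e.\ $F(x)\in l^1_+$; this follows since each factor $\bigl(\tfrac{1+s_i}{1+\|x\|}\bigr)^k\le 1$ (because $s_i\le\|x\|$ forces the ratio $\le 1$), so $\|F(x)\|\le\sum_i\lambda_i=\|\lambda\|$, which also immediately supplies the upper bound $\|F(x)\|\le\|\lambda\|$ required for membership in $A_k$. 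Positivity of each $x'_i$ is clear from $\lambda_i>0$.

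Next I would establish the lower bound $\|F(x)\|\ge \lambda_1+\frac{\|\lambda\|-\lambda_1}{(1+\|x\|)^k}$, or rather the stronger statement with $\|\lambda\|$ in the denominator. The idea is to treat the index $i=1$ separately: since $a_{1j}=1$ for all $j$, we have $s_1=\|x\|$, so the first term of $F(x)$ is exactly $\lambda_1$, which is the "free" part of the lower bound. For every other $i\ge 2$, bound the ratio from below by $\bigl(\tfrac{1+s_i}{1+\|x\|}\bigr)^k\ge\bigl(\tfrac{1}{1+\|x\|}\bigr)^k$ using $s_i\ge 0$, and then use the constraint $\|x\|\le\|\lambda\|$ coming from $x\in A_k$ to replace $1+\|x\|$ by the larger $1+\|\lambda\|$, yielding $\bigl(\tfrac{1}{1+\|x\|}\bigr)^k\ge\bigl(\tfrac{1}{1+\|\lambda\|}\bigr)^k$. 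Summing the terms $i\ge2$ then gives $\sum_{i\ge2}\lambda_i\cdot(1+\|\lambda\|)^{-k}=(\|\lambda\|-\lambda_1)(1+\|\lambda\|)^{-k}$, so that
\[
\|F(x)\|\ge \lambda_1+\frac{\|\lambda\|-\lambda_1}{(1+\|\lambda\|)^k},
\]
which is exactly the prescribed lower endpoint for $\|F(x)\|$.

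Combining the two bounds shows $F(x)\in A_k$ whenever $x\in A_k$, completing the invariance. The main thing to watch is that both endpoints defining $A_k$ are controlled purely through the scalar $\|x\|$, so the estimates must be phrased in terms of $\|F(x)\|$ rather than the individual coordinates; the inequalities $0\le s_i\le\|x\|$ and the special role of the first row $a_{1j}\equiv 1$ are what make this possible. I do not expect a serious obstacle here, since everything is monotone and the constraint $\|x\|\le\|\lambda\|$ is used exactly once, to pass from $1+\|x\|$ to $1+\|\lambda\|$ in the lower bound; the only care needed is to confirm that the series manipulations (interchanging the sum over $i$ with the supremum/infimum estimates) are justified by absolute convergence, which holds because $\|F(x)\|\le\|\lambda\|<\infty$.
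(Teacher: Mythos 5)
Your proposal is correct and follows essentially the same route as the paper's own proof: upper bound $\|F(x)\|\le\|\lambda\|$ from the ratio being at most $1$, then the lower bound by isolating the $i=1$ term (which equals $\lambda_1$ since $a_{1j}\equiv 1$) and bounding the terms $i\ge 2$ below via $s_i\ge 0$ and $\|x\|\le\|\lambda\|$. Only a cosmetic slip: the version of the lower bound with $\|\lambda\|$ in the denominator is the \emph{weaker} one (and the one actually needed for membership in $A_k$), not the stronger one as you wrote, but your chain of inequalities is in the correct direction throughout.
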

\begin{proof} For any  $x\in \ell^1_{+}$ from (\ref{1}) we get
$$0<x'_{i}=\lambda_{i}\left(\frac{1+\sum_{j=1}^{\infty}a_{ij}x_j}{1+\|x\|}\right)^k\leq\lambda_{i}\left(\frac{1+\|x\|}{1+\|x\|}\right)^k=\lambda_{i}.$$
Consequently,
$$\|x'\|=\sum_{i=1}^{\infty}x'_i\leq\sum_{i=1}^{\infty}\lambda_i=\|\lambda\|.$$
Now we find a lower bound for $\|x'\|$. For $x\in A_k$, we have from (\ref{1}) that
$$x'_{1}=\lambda_{1},$$
$$x'_{i}=\lambda_{i}\left(\frac{1+\sum_{j=1}^{\infty}a_{ij}x_j}{1+\|x\|}\right)^k\geq\frac{\lambda_{i}}{(1+\|x\|)^k}\geq\frac{\lambda_{i}}{(1+\|\lambda\|)^k}, \ \ i\geq 2.$$
Therefore,
 $$\|x'\|\geq\lambda_{1}+\sum_{j=2}^{\infty}\frac{\lambda_{i}}{(1+\|\lambda\|)^k}=\lambda_1+\frac{\|\lambda\|-\lambda_1}{(1+\|\lambda\|)^k}.$$
\end{proof}
Denote
$$\hat{\lambda}:=\hat{\lambda}(k) = \frac{1}{6k}\left(1-3k+\sqrt[3]{1+9k+72k^2+\sqrt{(1+9k+72k^2)^2 -(3+18k+9k^2)^3}}\right.$$
$$\left.+\sqrt[3]{1+9k+72k^2-\sqrt{(1+9k+72k^2)^2-(3+18k+9k^2)^3}}\right).$$
By a plot one can see that $\hat\lambda(k)>0$ is a decreasing function of $k\geq 2$, with maximal value $\hat\lambda(2)\approx 0.5296$.
\begin{lemma}\label{lemma2} For any  $\lambda\in\ell^1_{+}$ with $\|\lambda\|<\hat{\lambda}$, there exists $\kappa(\lambda,k)\in (0, 1)$ such that
$$\|F(x)-F(y)\|<\kappa\|x-y\|,  \ \ \mbox{for all} \ \ x, y\in A_{k}.$$
Thus $F:A_k\to A_k$ is a contraction.
\end{lemma}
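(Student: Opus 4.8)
The plan is to show that $F$ restricted to $A_k$ is a Lipschitz map with an explicit constant $\kappa=\kappa(\|\lambda\|,k)$, and that $\kappa<1$ precisely when $\|\lambda\|<\hat\lambda$. Write $S_i(x)=1+\sum_{j=1}^\infty a_{ij}x_j$ and $T(x)=1+\|x\|$, so that $F_i(x)=\lambda_i\,(S_i(x)/T(x))^k$. Since $a_{ij}\in\{0,1\}$ we have $0<S_i(x)\le T(x)$, hence each ratio $S_i(x)/T(x)$ lies in $(0,1]$. I would then estimate $|F_i(x)-F_i(y)|$ through two elementary devices: the telescoping identity $a^k-b^k=(a-b)\sum_{l=0}^{k-1}a^l b^{k-1-l}$ applied with $a=S_i(x)/T(x)$, $b=S_i(y)/T(y)\in(0,1]$ (so the second factor is at most $k$), and the exact difference
\begin{equation*}
\frac{S_i(x)}{T(x)}-\frac{S_i(y)}{T(y)}=\frac{(\|y\|-\|x\|)\,S_i(x)}{T(x)T(y)}+\frac{\sum_j a_{ij}(x_j-y_j)}{T(y)}.
\end{equation*}

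Next I would sum over $i$. The structural facts that make this work are $\sum_i\lambda_i a_{ij}\le\|\lambda\|$ for every $j$ (because $a_{ij}\le 1$) and the consequence $\sum_i\lambda_i S_i(x)\le\|\lambda\|\,T(x)$; combined with $|\,\|x\|-\|y\|\,|\le\|x-y\|$ and $\sum_j|x_j-y_j|=\|x-y\|$, each of the two terms above becomes a multiple of $\|x-y\|$ divided by $T(y)=1+\|y\|$. Invoking Lemma~\ref{lemma1}, every $y\in A_k$ satisfies $\|y\|\ge\lambda_1+(\|\lambda\|-\lambda_1)/(1+\|\lambda\|)^k$, which bounds $T(y)$ from below and delivers a constant $\kappa(\|\lambda\|,k)$ with $\|F(x)-F(y)\|\le\kappa\|x-y\|$. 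A sharper route, which I expect to be needed to reach the exact threshold, uses that $\|x\|=\sum_i x_i$ is linear on $\ell^1_+$, so $A_k$ is convex; then by the mean value inequality it suffices to control $\sup_{z\in A_k}\sup_m\sum_i|\partial F_i/\partial z_m|$, where $\partial F_i/\partial z_m=\lambda_i k(S_i/T)^{k-1}(a_{im}T-S_i)/T^2$. Splitting the column sum according to $a_{im}=0$ or $a_{im}=1$ reduces the worst case to optimising the one-variable functions $s^{k-1}(1-s)$ and $s^k$ on $(0,1]$, which removes the loss incurred by the cruder direct estimate.

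The final step is purely analytic. Setting $t=\|\lambda\|$, the inequality $\kappa(t,k)<1$ rearranges, after clearing denominators, into an algebraic inequality in $t$ whose boundary case $\kappa(t,k)=1$ is the equation defining the threshold; its relevant positive root is $\hat\lambda$, given in closed form by Cardano's formula (the $\tfrac1{6k}$ prefactor and the nested cube roots in the definition of $\hat\lambda$ are exactly the solution of that cubic). Since $\kappa(t,k)$ is increasing in $t$, one concludes $\kappa<1$ on $[0,\hat\lambda)$, and monotonicity in $k$ of the root yields the maximal value $\hat\lambda(2)$.

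I expect the main obstacle to be obtaining a constant sharp enough to reach $\hat\lambda$ rather than a weaker bound such as $k\|\lambda\|$: the elementary telescoping estimate loses constant factors, so one must simultaneously exploit $S_i\le T$, the $A_k$ lower bound on $\|x\|$ from Lemma~\ref{lemma1}, and the optimisation of the emerging one-variable functions in order to enlarge the contraction region to all of $\{\|\lambda\|<\hat\lambda\}$. Secondary technical points are justifying the mean value inequality on the convex set $A_k$ (equivalently, verifying that $F$ is $C^1$ with the stated derivative, the infinite sums being handled by the $\ell^1$ summability of $\lambda$) or, in the direct approach, controlling the telescoping sum uniformly in $x,y\in A_k$, together with the monotonicity and root analysis of the resulting cubic.
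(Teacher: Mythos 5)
Your setup coincides with the paper's: the telescoping factor bounded by $k$, and your decomposition of $S_i(x)/T(x)-S_i(y)/T(y)$ is algebraically the same identity the paper uses. The genuine gap is the quantitative step you yourself flag as the ``main obstacle'', and neither of your two routes closes it. Summing your two error terms as you describe gives the constant $2k\|\lambda\|/(1+\|y\|)$, and after inserting the lower bound $\|y\|\ge\|\lambda\|/(1+\|\lambda\|)^k$ coming from Lemma \ref{lemma1}, the condition $\kappa<1$ becomes $2k\|\lambda\|(1+\|\lambda\|)^k<(1+\|\lambda\|)^k+\|\lambda\|$; at $k=2$ its threshold is $\approx 0.295$, well below $\hat\lambda(2)\approx 0.5296$, so the direct route proves contraction only on a strictly smaller parameter region than the lemma asserts. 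What the paper does, and what your sketch is missing, is a regrouping that recovers the lost factor of (almost) two: using the identity $\|y\|-\|x\|+\sum_j a_{ij}(x_j-y_j)=-\sum_j(1-a_{ij})(x_j-y_j)$, the numerator of the exact difference rewrites as $-(1+\|x\|)\sum_j(1-a_{ij})(x_j-y_j)-(\|y\|-\|x\|)\bigl(\|x\|-\sum_j a_{ij}x_j\bigr)$, so the norm-difference term is multiplied by $\|x\|-\sum_j a_{ij}x_j\le\|x\|$ rather than by $1+\|x\|$. This yields $|F_i(x)-F_i(y)|\le k\lambda_i(1+2\|x\|)\,\|x-y\|/\bigl((1+\|x\|)(1+\|y\|)\bigr)$, and it is exactly this sharper constant which, after maximizing $(1+2t)/(1+t)$ and $1/(1+t)$ over $A_k$, turns $\kappa<1$ into the cubic inequality $\varphi(t)=2kt^3+(3k-1)t^2+(k-2)t-2<0$ whose unique positive root (Descartes' rule plus Cardano) is $\hat\lambda$.

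Your fallback via the mean value inequality on the convex set $A_k$ and the Jacobian column-sum norm is in principle sound, and in fact stronger: optimizing $s^{k-1}(1-s)$ and $s^k$ on $(0,1]$ gives the constant $k\|\lambda\|/(1+\|y\|)$, hence a contraction region that does contain $(0,\hat\lambda)$. But your final analytic step is asserted rather than proved, and the assertion is false: the inequality $\kappa<1$ obtained from that bound reads $(k\|\lambda\|-1)(1+\|\lambda\|)^k<\|\lambda\|$, which is \emph{not} the equation defining $\hat\lambda$ (at $k=2$ it is $2t^3+3t^2-t-1<0$, with root $\approx 0.62\neq\hat\lambda(2)$; for $k\ge 3$ it is not a cubic at all). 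So ``its relevant positive root is $\hat\lambda$'' cannot be the conclusion of either route; to finish along your lines you would still have to prove the containment of $(0,\hat\lambda)$ in your contraction region by comparing your bound with the paper's, a step absent from the proposal. The clean fix is to adopt the regrouping identity above, which is the one ingredient of the paper's proof your sketch lacks.
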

\begin{proof} Recall that $x_1'=\lambda_1$. Take any  $(\lambda_{1},x_{2},x_{3},\dots,x_{n},\dots)$ and $(\lambda_{1},y_{2},y_{3},\dots,y_{n},\dots)$.
Then we have
$$F_{i}(x)-F_{i}(y)=\lambda_{i}\left(\left(\frac{1+\sum_{j=1}^{\infty}a_{ij}x_j}{1+\|x\|}\right)^k-\left(\frac{1+\sum_{j=1}^{\infty}a_{ij}y_j}{1+\|y\|}\right)^k\right)$$
$$=\lambda_{i}\left(\frac{1+\sum_{j=1}^{\infty}a_{ij}x_j}{1+\|x\|}-\frac{1+\sum_{j=1}^{\infty}a_{ij}y_j}{1+\|y\|}\right)U(x,y),
$$where
$$ U(x,y):=
\sum_{p=0}^{k-1}\left(\frac{1+\sum_{j=1}^{\infty}a_{ij}x_j}{1+\|x\|}\right)^{k-p-1}\cdot\left(\frac{1+\sum_{j=1}^{\infty}a_{ij}y_j}{1+\|y\|}\right)^{p}.$$

Consequently,
$$F_{i}(x)-F_{i}(y)= U(x,y)\lambda_{i} \frac{(1+\sum_{j=1}^{\infty}a_{ij}x_j)(1+\|y\|)-(1+\sum_{j=1}^{\infty}a_{ij}y_j)(1+\|x\|)}{(1+\|x\|)(1+\|y\|)}$$
$$=U(x,y)\lambda_{i}\frac{(1+\|x\|)\sum_{j=1}^{\infty}a_{ij}(x_j-y_j)+(\|y\|-\|x\|)\left(1+\sum_{j=1}^{\infty}a_{ij}x_j\right)}{(1+\|x\|)(1+\|y\|)}.$$

Now we use the following inequalities:

1) If $x,y\in\ell^1_{+}$ then $\|y\|-\|x\|=\sum_{j=1}^{\infty}(y_{j}-x_{j})$ and $\|y-x\|=\sum_{j=1}^{\infty}|y_{j}-x_{j}|$, therefore, $$-\|y-x\|\leq\|y\|-\|x\|\leq\|y-x\|.$$

2) For any $x,y\in\ell^1_{+}$, by $a_{ij}\in{\{0,1\}}$ we get
$$\sum_{j=1}^{\infty}a_{ij}x_{j}\leq\|x\|, \ \ \sum_{j=1}^{\infty}a_{ij}(x_{j}-y_{j})\leq\|x-y\|.$$


3) We note also that
$$\|y\|-\|x\|+\sum_{j=1}^{\infty}a_{ij}(x_j-y_j)$$
$$=\sum_{j=1}^{\infty}(y_j-x_j)-\sum_{j=1}^{\infty}a_{ij}(y_j-x_j)=-\sum_{j=1}^{\infty}(1-a_{ij})(x_j-y_j).$$
Thus we have $$-\|x-y\|\leq\sum_{j=1}^{\infty}(1-a_{ij})(x_j-y_j)\leq\|x-y\|.$$

Since $a_{ij}\in\{0, 1\}$ and $x_i>0$ we have $U(x,y)\leq k$ for any $x,y$.
Using the above-mentioned inequalities we obtain:
$$|F_{i}(x)-F_{i}(y)|\leq
\frac{k\lambda_{i}(1+2\|x\|)}{(1+\|x\|)(1+\|y\|)}\|x-y\|.$$
$$\|F(x)-F(y)\|=\sum_{j=1}^{\infty}|F_{i}(x)-F_{i}(y)|
\leq\frac{k\|\lambda\|(1+2\|x\|)}{(1+\|x\|)(1+\|y\|)}\|x-y\|.$$

Denote
$$K(x,y)=\frac{k\|\lambda\|(1+2\|x\|)}{(1+\|x\|)(1+\|y\|)}=\frac{k\|\lambda\|(1+2\|x\|)}{1+\|x\|}\frac{1}{1+\|y\|}.$$
To find upper bound of $K(x,y)$, we introduce  $$g(t)=\frac{1+2t}{1+t}, \ \ h(t)=\frac{1}{1+t}, \ \ t>0.$$
We find maximal values of these functions for $t$ which satisfies:
$$\lambda_1+\frac{\|\lambda\|-\lambda_1}{(1+\|\lambda\|)^2}\leq t \leq\|\lambda\|.$$

It is clear that  $g(t)$ is an increasing function with maximal value  $$g_{\max}(\|\lambda\|)=\frac{1+2\|\lambda\|}{1+\|\lambda\|}.$$
Moreover, function $h(t)$ is a decreasing function and $$h_{\max}(\lambda_1+\frac{\|\lambda\|-\lambda_1}{(1+\|\lambda\|)^2})=\frac{1}{1+\lambda_1+\frac{\|\lambda\|-\lambda_1}{(1+\|\lambda\|)^2}}<\frac{1}{1+\frac{\|\lambda\|}{(1+\|\lambda\|)^2}}.$$
Using these values we obtain
$$K(x,y)<k\|\lambda\|\frac{(1+2\|\lambda\|)}{(1+\|\lambda\|)}\frac{1}{1+\frac{\|\lambda\|}{(1+\|\lambda\|)^2}}=k\frac{2\|\lambda\|^3+3\|\lambda\|^2+\|\lambda\|}{\|\lambda\|^2+2\|\lambda\|+2}.$$
Now we want to find $\lambda$ such that
$$\frac{2k\|\lambda\|^3+3k\|\lambda\|^2+k\|\lambda\|}{\|\lambda\|^2+2\|\lambda\|+2}<1.$$
That is
$$\varphi(\|\lambda\|):=2k\|\lambda\|^3+(3k-1)\|\lambda\|^2+(k-2)\|\lambda\|-2<0.$$
Note that $\varphi(0)=-2<0$ and $\varphi(2)>0$. Therefore $\varphi$ has at least one zero in $(0,2)$. According to Descartes' theorem, $\varphi(\|\lambda\|)=0$ has unique positive solution, since the signs of its coefficients change only once.
By Cardano formula we obtain explicit form of the unique root, which is $\hat \lambda$ defined above.

%
%
%
%
%
%
%

Thus $\varphi(\|\lambda\|)<0$ if $\|\lambda\|<\hat\lambda$.

\end{proof}
For a contraction mapping the following theorem is known:
\begin{thm} If $\|\lambda\|<\hat{\lambda}$ then the operator (\ref{1}) has unique fixed point $z^{*}$ and for any initial point $z^{(0)}\in A_{k}$ we have $\lim\limits_{n \to \infty}F^n(z^{(0)})=z^{*}$.
\end{thm}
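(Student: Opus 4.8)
The plan is to recognize the statement as a direct instance of the Banach fixed point theorem (the contraction mapping principle), for which the two preceding lemmas supply precisely the required hypotheses. Once the underlying set is confirmed to be a complete metric space on which $F$ acts as a contracting self-map, the existence and uniqueness of the fixed point, together with the convergence of the iterates $F^n(z^{(0)})$, follow immediately.

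First I would verify that $A_k$, equipped with the metric inherited from the norm $\|\cdot\|$ of $\ell^1$, is a complete metric space. Since $A_k$ is cut out by the two closed conditions $\lambda_1+\frac{\|\lambda\|-\lambda_1}{(1+\|\lambda\|)^k}\leq \|x\|\leq \|\lambda\|$ imposed on the continuous functional $x\mapsto\|x\|$, it is a closed subset of $\ell^1$; being a closed subset of a Banach space, it is complete. I would also record that $A_k$ is nonempty, since $\lambda$ itself lies in $A_k$: the upper bound $\|\lambda\|\leq\|\lambda\|$ is trivial, and the lower bound holds because $\frac{\|\lambda\|-\lambda_1}{(1+\|\lambda\|)^k}\leq\|\lambda\|-\lambda_1$ follows from $(1+\|\lambda\|)^{-k}\leq 1$ together with $\|\lambda\|\geq\lambda_1$.

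Next, Lemma \ref{lemma1} gives $F(A_k)\subset A_k$, so $F$ restricts to a self-map of $A_k$, while Lemma \ref{lemma2}, under the standing hypothesis $\|\lambda\|<\hat\lambda$, produces a constant $\kappa=\kappa(\lambda,k)\in(0,1)$ with $\|F(x)-F(y)\|\leq\kappa\|x-y\|$ for all $x,y\in A_k$; hence $F$ is a contraction on the complete space $A_k$. The Banach fixed point theorem then yields a unique $z^{*}\in A_k$ with $F(z^{*})=z^{*}$, and $\lim_{n\to\infty}F^n(z^{(0)})=z^{*}$ for every $z^{(0)}\in A_k$, which is exactly the asserted convergence.

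The only point demanding slight care — and the closest thing here to an obstacle — is upgrading uniqueness from $A_k$ to the full domain $\ell^1_+$. I would argue that every fixed point $z\in\ell^1_+$ must already lie in $A_k$: the computation in the proof of Lemma \ref{lemma1} shows $\|F(x)\|\leq\|\lambda\|$ for all $x\in\ell^1_+$, so $\|z\|=\|F(z)\|\leq\|\lambda\|$; but this is exactly the inequality used to derive the lower bound, so the same computation gives $\|z\|=\|F(z)\|\geq\lambda_1+\frac{\|\lambda\|-\lambda_1}{(1+\|\lambda\|)^k}$. Therefore $z\in A_k$, and the uniqueness already established inside $A_k$ forces $z=z^{*}$, which completes the argument.
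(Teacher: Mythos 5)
Your strategy is the same as the paper's: the paper offers no proof beyond invoking the contraction mapping principle, with Lemma \ref{lemma1} supplying the invariance $F(A_k)\subset A_k$ and Lemma \ref{lemma2} the contraction constant $\kappa\in(0,1)$, exactly as you set it up. Your final paragraph, upgrading uniqueness from $A_k$ to all of $\ell^1_+$, is a genuine (and correct) addition that the paper leaves implicit, and it is in fact needed for the theorem as literally stated, since the operator's domain is $\ell^1_+$ rather than $A_k$.

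There is, however, one false step in your completeness verification: $A_k$ is \emph{not} a closed subset of $\ell^1$. The set $A_k$ lives inside $l^1_+$, whose definition demands \emph{strict} positivity $x_i>0$ of every coordinate, and this is an open-type condition not preserved by $\ell^1$-limits. Concretely, with $c=\lambda_1+(\|\lambda\|-\lambda_1)(1+\|\lambda\|)^{-k}$, the points $x^{(\varepsilon)}=(c,\varepsilon 2^{-2},\varepsilon 2^{-3},\dots)$ belong to $A_k$ for all small $\varepsilon>0$, yet converge as $\varepsilon\to 0$ to $(c,0,0,\dots)$, which satisfies the norm constraints but lies outside $l^1_+$, hence outside $A_k$. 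So ``closed conditions on the continuous functional $\|\cdot\|$'' does not suffice. The repair is routine but should be made explicit: either apply the Banach theorem on the closure $\overline{A_k}$ (complete, and the Lipschitz bound of Lemma \ref{lemma2} extends to it by continuity of $F$ on the nonnegative cone) and then note that the fixed point $z^*=F(z^*)$ automatically has strictly positive coordinates, because $F_1(x)=\lambda_1>0$ and $F_i(x)\geq \lambda_i(1+\|\lambda\|)^{-k}>0$ for every $x\in\overline{A_k}$, so that $z^*\in A_k$ after all; or else run the whole argument on the smaller set $\bigl\{x:\, x_1=\lambda_1,\ \lambda_i(1+\|\lambda\|)^{-k}\leq x_i\leq\lambda_i \text{ for } i\geq 2\bigr\}$, which the proof of Lemma \ref{lemma1} shows is invariant under $F$, which is contained in $A_k$, and which \emph{is} closed in $\ell^1$ (it is an intersection of coordinatewise closed conditions with strictly positive lower bounds). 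With either correction your proof is complete.
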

\section{Examples of uniqueness}
In this section we give some examples of operator $F$, which has unique fixed point.\\

1. If for all $i,j$ we assume  $a_{ij}=1$ (or $a_{ij}=0$) then operator can be written as
$$x'_{i}=\lambda_{i}\left(\frac{1+\sum_{j=1}^{\infty}a_{ij}x_j}{1+\sum_{j=1}^{\infty}a_{1j}x_j}\right)^k=\lambda_i, \ \ \forall i\in \mathbb N.$$

Thus $F(A_k)={\lambda}$, i.e., any point $x\in A_k$ after first action of $F$, goes to $\lambda$.\\

2. Let $k=2$. If for all $i,j$ we have $a_{1j}=1$, $a_{i1}=1$ and remaining  $a_{ij}=0$ then the operator becomes
\begin{equation}\label{e2o} x'_{1}=\lambda_{1}, \ \ \
x'_{i}=\lambda_{i}\left(\frac{1+\lambda_{1}}{1+\|x\|}\right)^2, i\geq 2.
\end{equation}
To find fixed points of this operator we have to solve
\begin{equation}\label{e2} x_{1}=\lambda_{1}, \ \ \
	x_{i}=\lambda_{i}\left(\frac{1+\lambda_{1}}{1+\|x\|}\right)^2, i\geq 2.
\end{equation}
Summing all equations of this system we get
$$\|x\|=\lambda_1+\frac{(1+\lambda_1)^2}{(1+\|x\|)^2}(\|\lambda\|-\lambda_1).$$
This is
$$\psi(\|x\|):=\|x\|^3+(2-\lambda_1)\|x\|^2+(1-2\lambda_1)\|x\|-(1+\lambda_1)^2(\|\lambda\|-\lambda_1)=0.$$

Note that $\psi(0)<0$ and $\psi(+\infty)>0$. Therefore $\psi$ has at least one root in $(0,+\infty)$. According to Descartes' theorem, $\psi(\|x\|)=0$ has unique positive root, since the signs of its coefficients change only once for each fixed $\lambda_1$.
By Cardano formula we obtain explicit form of the unique root:

%
%
%
%
%
%
%
%
%
%

$$\|x\|=\frac{1+\lambda_1}{3}-1+\alpha+\beta,$$
where
$$\alpha=\sqrt[3]{\frac{(1+\lambda_1)^2}{6}}\cdot\sqrt[3]{1+\sqrt{(9\|\lambda\|+\frac{2-25\lambda_1}{3})^2-\frac{4(1+\lambda_1)^2}{3}}},$$
$$\beta=\sqrt[3]{\frac{(1+\lambda_1)^2}{6}}\cdot\sqrt[3]{1-\sqrt{(9\|\lambda\|+\frac{2-25\lambda_1}{3})^2-\frac{4(1+\lambda_1)^2}{3}}}.$$
This unique $\|x\|$, by formula (\ref{e2}), defines unique fixed point of operator (\ref{e2o}) .

 3. In this example we take $a_{i1}=0$,  $i\geq 2$ and remaining $a_{ij}=1$. Then corresponding fixed point equation is

\begin{equation}\label{e3}
	x_{1}=\lambda_{1}, \ \
x_{i}=\lambda_{i}\left(\frac{1+\|x\|-\lambda_{1}}{1+\|x\|}\right)^2, \ \  i\geq 2.
\end{equation}
From this equation we get

$$\|x\|=\lambda_1+\frac{(1+\|x\|-\lambda_1)^2}{(1+\|x\|)^2}(\|\lambda\|-\lambda_1),$$ i.e.,
$$\|x\|^3+(2-\|\lambda\|)\|x\|^2+(2(1-\lambda_1)(1+\lambda_1-\|\lambda\|)-1)\|x\|-\lambda_1-(1-\lambda_1)^2(\|\lambda\|-\lambda_1)=0.$$

Similarly to the above mentioned examples one can show that this equation has unique solution:

%
%
%
%
%
%
%
%
$$\|x\|=\frac{1+\|\lambda\|}{3}-1+\alpha+\beta,$$
where
$$\alpha=\sqrt[3]{-\frac{B}{2}+\sqrt{(\frac{A}{3})^3+(\frac{B}{2})^2}},$$
$$\beta=\sqrt[3]{-\frac{B}{2}-\sqrt{(\frac{A}{3})^3+(\frac{B}{2})^2}},$$
$$A=\lambda_1(\|\lambda\|-\lambda_1)-
\frac{(1+\|\lambda\|)^2}{3},$$  $$B=-\frac{2(1+\|\lambda\|)^3}{27}+\frac{2\lambda_1(\|\lambda\|-\lambda_1)(1+\|\lambda\|)}{3}-\lambda_1^2(\|\lambda\|-\lambda_1).$$
Putting this unique $\|x\|$ in (\ref{e3}) we get the unique fixed point of the operator.

\section{An example for non-uniqueness}
In this section, for $k=2$, we consider an operator and show that it has more than one fixed points. Namely, we show that depending on parameters the operator has up to five fixed points.

Take $a_{11}=1$, $a_{1j}=1$, $a_{i1}=0$ for any $i\ne 1$, $j\ne 1$ and for other values of $i$, $j$ we take
\begin{equation}\label{GG}a_{ij}=\left\{\begin{array}{ll}
1, \ \ \mbox{if} \ \ i+j \ \ \mbox{is even}\\[2mm]
0, \ \ \mbox{if} \ \ i+j \ \ \mbox{is odd}.
\end{array}\right.
\end{equation}
Then the corresponding operator has the following form
\begin{equation}\label{e4o}
\begin{array}{lll}
	x'_{1}=\lambda_{1}\\[2mm]
x'_{2n}=\lambda_{2n}\left(\dfrac{1+\sum_{j=1}^{\infty}x_{2j}}{1+\|x\|}\right)^2\\[2mm]
x'_{2n+1}=\lambda_{2n+1}\left(\dfrac{1+\sum_{j=1}^{\infty}x_{2j+1}}{1+\|x\|}\right)^2.
\end{array}
\end{equation}
To find fixed points of this operator we introduce
$$M_1=\sum_{j=1}^{\infty}x_{2j+1}, \ \ M_2=\sum_{j=1}^{\infty}x_{2j},$$ $$L_1=\sum_{j=1}^{\infty}\lambda_{2j+1}, \ \ L_2=\sum_{j=1}^{\infty}\lambda_{2j}.$$
Note that $\|x\|=x_1+M_1+M_2$.

Then the fixed point equation of (\ref{e4o}) is reduced to
\begin{equation}\label{e4}
	\begin{array}{lll}
		x_{1}=\lambda_{1}\\[2mm]
		x_{2n}=\lambda_{2n}\left(\dfrac{1+M_2}{1+\|x\|}\right)^2\\[2mm]
		x_{2n+1}=\lambda_{2n+1}\left(\dfrac{1+M_1}{1+\|x\|}\right)^2.
	\end{array}
\end{equation}
 Summing the equations we get
\begin{equation}\label{ef4}
	\begin{array}{ll}
		M_1=L_1\left(\dfrac{1+M_1}{1+\lambda_1+M_1+M_2}\right)^2\\[2mm]
M_2=L_2\left(\dfrac{1+M_2}{1+\lambda_1+M_1+M_2}\right)^2.
\end{array}
\end{equation}
Thus each solution $(M_1, M_2)$ to (\ref{ef4}), by formula (\ref{e4}), uniquely defines a fixed point of operator (\ref{e4o}).

For simplicity we assume $L_1=L_2=L$.
Then from (\ref{ef4}) we get

$$M_1-M_2=L\left(\left(\dfrac{1+M_1}{1+\lambda_1+M_1+M_2}\right)^2-\left(\dfrac{1+M_2}{1+\lambda_1+M_1+M_2}\right)^2\right).$$
Consequently,
$$(M_1-M_2)(1+\lambda_1+M_1+M_2)^2=L\left(M_1-M_2\right)\left(2+M_1+M_2\right).$$
That is
\begin{equation}\label{MM}
	(M_1-M_2)\left[(1+\lambda_1+M_1+M_2)^2-L\left(2+M_1+M_2\right)\right]=0.
	\end{equation}
From (\ref{MM}) we get
$M_1=M_2$ or
\begin{equation}\label{M2}
(1+\lambda_1+M_1+M_2)^2=L\left(2+M_1+M_2\right).
\end{equation}

{\bf Case:} $M_1=M_2=M$. In this case we get
$$M=L\left(\dfrac{1+M}{1+\lambda_1+2M}\right)^2.$$
Denoting $a=\frac{4}{L}$ and $b=\frac{1+\lambda_1}{2}$ rewrite the last equation in the form
$$a=\frac{1}{M}\left(\frac{1+M}{b+M}\right)^2.$$
Introduce the following function  $$f(x)=\frac{1}{x}\left(\frac{1+x}{b+x}\right)^2.$$
We have
$$f'(x)=-\frac{(1+x)}{x^2(b+x)^3}(x^2+(3-b)x+b).$$
Note that if $b\leq 9$ then $f'(x)<0$ and the equation $a=f(x)$ has unique positive solution for each $a>0$. For $b>9$, from $f'(x)=0$ we get two positive solutions:
$$x_{1,2}=\frac{b-3\pm\sqrt{b^2-10b+9}}{2}.$$
Let $0<f(x_1)<f(x_2)$ then
$$\mbox{the number of positive solutions to} \ \ f(x)=a \ \  \mbox{is} \ \ \left\{\begin{array}{lll}
	1, \ \ \mbox{if} \ \ a\notin [f(x_1), f(x_2)]\\[2mm]
	2, \ \ \mbox{if} \ \ a\in\{f(x_1), f(x_2)\}\\[2mm]
	3, \ \ \mbox{if} \ \ a\in (f(x_1), f(x_2)).
\end{array}\right.
$$
Note that we have explicit form of  $f(x_1)$ and $f(x_2)$, but  they have bulky formula.
If $b=10$, for example, then in the case of 3 solutions, the above mentioned condition on $a$ becomes $\frac{1}{32}< a<\frac{4}{125}$. This condition for the initial parameters is as  $\lambda_1=19$, $125\leq L\leq 128$.

Recall that $a$ and $b$ depend on $L$ and $\lambda_1$. We denote
$$A_1=\{(L,\lambda_1)\in \mathbb R^2_+: a>0, b\leq 9\}\cup \{(L,\lambda_1)\in \mathbb R^2_+:a\notin [f(x_1), f(x_2)], b> 9\}.$$
$$A_2=\{(L,\lambda_1)\in \mathbb R^2_+:a\in \{f(x_1), f(x_2)\}, b> 9\}.$$
$$A_3=\{(L,\lambda_1)\in \mathbb R^2_+:a\in (f(x_1), f(x_2)), b> 9\}.$$
To give plots of these sets we rewrite above mentioned functions depending on initial parameters $L$ and $\lambda_1$:

$$x_1=\frac{\lambda_1-5-\sqrt{\lambda_1^2-18\lambda_1+17}}{4}, \ \ x_2=\frac{\lambda_1-5+\sqrt{\lambda_1^2-18\lambda_1+17}}{4},$$
$$f(x_1)=a \ \ \Leftrightarrow \ \  L=\frac{2\lambda_1^2+76\lambda_1-142+(2\lambda_1-34)\sqrt{\lambda_1^2-18\lambda_1+17}}{16},$$

$$f(x_2)=a\ \ \Leftrightarrow \ \  L=\frac{2\lambda_1^2+76\lambda_1-142-(2\lambda_1-34)\sqrt{\lambda_1^2-18\lambda_1+17}}{16}.$$
Using these equalities one draws the sets shown in Fig.\ref{A123}.
 \begin{figure}[h!]
	\includegraphics[width=9cm]{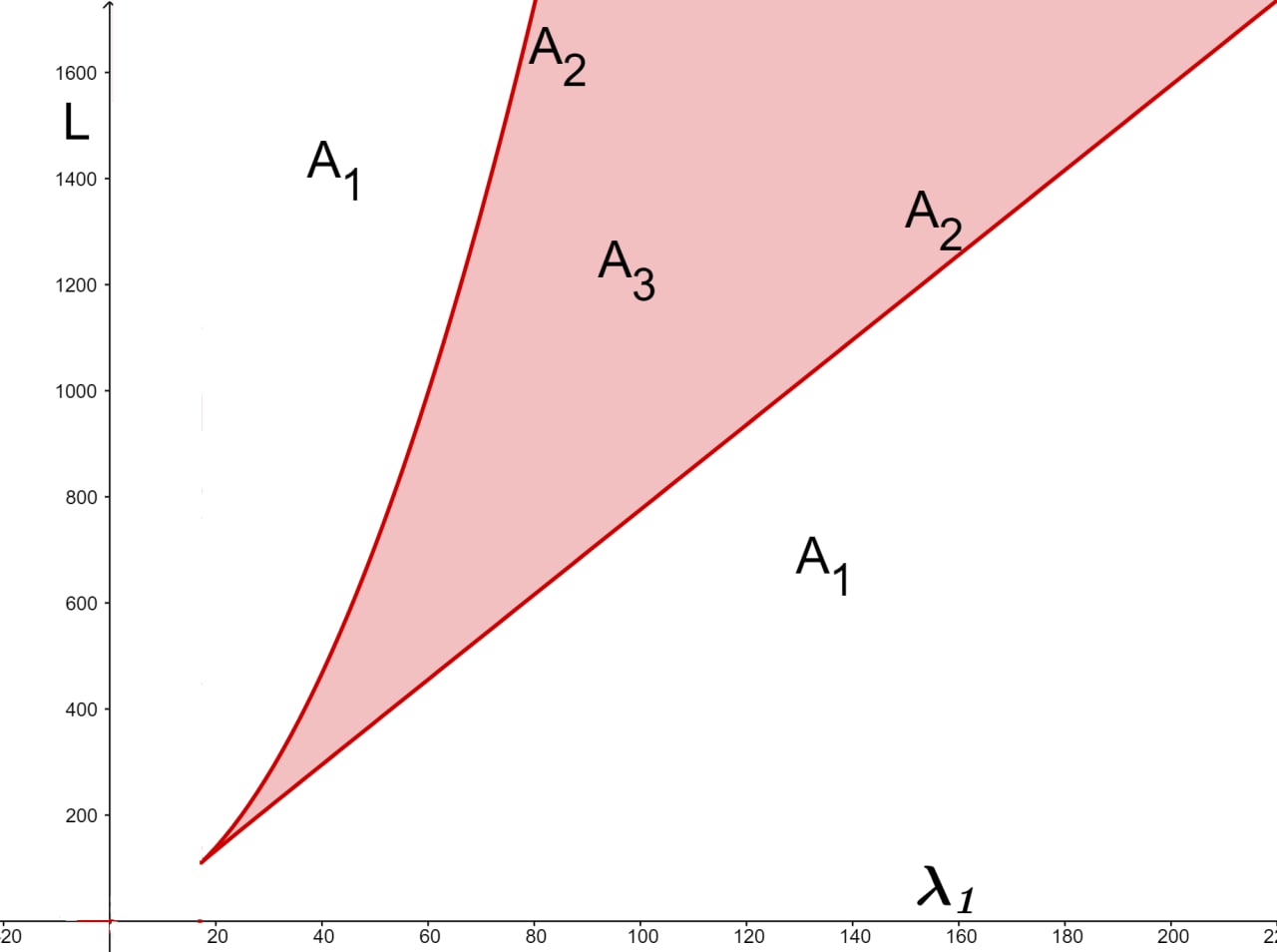}
	\caption{ The set $A_2$ is boundary of the red region, the set $A_3$ is inside of the region. The set $A_1$ is $\mathbb R^2\setminus (A_2\cup A_3)$. }\label{A123}
\end{figure}

{\bf Case}  (\ref{M2}): Now we assume $M_1\ne M_2$ and (\ref{M2}) holds. Denoting $M_1+M_2=t$ from (\ref{M2}) we get
\begin{equation}\label{Lt}
	L(2+t)=(1+\lambda_1+t)^2.
	\end{equation}
That is
$$t^2+(2+2\lambda_1-L)t+(1+\lambda_1)^2-2L=0.$$
It has solutions
$$t_{1,2}={1\over 2}(L-2-2\lambda_1\pm \sqrt{D}),$$
where $D=L(L+4-4\lambda_1)$. These solutions are defined iff $\lambda_1\leq 1+L/4$. Moreover, the condition $t_{1,2}>0$ gives that

1) $\lambda_1>\sqrt{2L}-1$,

2) $\lambda_1<\frac{L}{4}+1$,

3) $\lambda_1<\frac{L}{2}-1$.

These inequalities can be written as
\begin{equation}\label{L}
	\sqrt{2L}-1<\lambda_1<{L\over 4}+1, \ \ L>8.
	\end{equation}
Now for each $t_{1,2}=M_1+M_2$ from (\ref{ef4}) (recall $L_1=L_2=L$) we get
$$M_1=L\left(\dfrac{1+M_1}{1+\lambda_1+t_{1,2}}\right)^2, \ \ M_2=L\left(\dfrac{1+M_2}{1+\lambda_1+t_{1,2}}\right)^2.$$
Since $t_{1,2}$ satisfies (\ref{Lt}) the last system of equations can be written as
\begin{equation}\label{te}
	M_1=\dfrac{(1+M_1)^2}{2+t_{1,2}}, \ \	M_2=\dfrac{(1+M_2)^2}{2+t_{1,2}}.
	\end{equation}
From this system we get
$$\dfrac{(1+M_1)^2}{M_1}=\dfrac{(1+M_2)^2}{M_2},$$
that is satisfied only for $M_1=M_2$ and $M_1M_2=1$.
In the previous case we considered $M_1=M_2$, here remains $M_1M_2=1$.
From the first equation of (\ref{te}) we get
$$M_1={1\over 2}(t_{1,2}\pm \sqrt{t_{1,2}^2-4}).$$
This solution exists and positive iff $t_{1,2}\geq 2$.
Now, under condition (\ref{L}), we check  $t_{1,2}\geq 2$.

{\bf Sub-case:} $t_1\geq 2$. This inequality can be simplified to
\begin{equation}\label{se}
	\sqrt{L(L+4-4\lambda_1)}\geq 6+2\lambda_1-L.
	\end{equation}
{\bf Sub-sub-case:} $6+2\lambda_1-L\leq 0$. In this case (\ref{se}) is satisfied. Under condition (\ref{L}) we get
\begin{equation}
	8+4\sqrt{3}\leq L\leq 16, \ \ \ \sqrt{2L}-1< \lambda_1\leq {L\over 2}-3.
\end{equation}
{\bf Sub-sub-case:} $6+2\lambda_1-L> 0$. In this case the inequality (\ref{se}) is equivalent to
$$\lambda_1>{L\over 2}-3, \ \ \lambda_1^2+6\lambda_1-4L+9\leq 0.$$
It is easy to see that the last inequalities and (\ref{L}) have the following common solutions:
\begin{equation}\label{hi}
\max\left\{{L\over 2}-3,\,	\sqrt{2L}-1\right\}<\lambda_1\leq 2\sqrt{L}-3, \ \ L>2(1+\sqrt{2})^2.
\end{equation}
Denote
$$B_1=B\cup C,$$ where
$$B=\left\{(L,\lambda_1)\in \mathbb R^2_+: 8+4\sqrt{3}\leq L\leq 16, \ \ \sqrt{2L}-1<\lambda_1\leq {L\over 2}-3\right\},$$
$$C=\left\{(L,\lambda_1)\in \mathbb R^2_+: 	\max\left\{{L\over 2}-3,\,	\sqrt{2L}-1\right\}<\lambda_1\leq 2\sqrt{L}-3, \ \ L>2(1+\sqrt{2})^2\right\}.$$


{\bf Sub-case:} $t_2\geq 2$. This inequality can be simplified to
\begin{equation}\label{see}
	-\sqrt{L(L+4-4\lambda_1)}\geq 6+2\lambda_1-L.
\end{equation}
{\bf Sub-sub-case:} $6+2\lambda_1-L\leq 0$. In this case from (\ref{see}) we obtain
$$\lambda_1\leq {L\over 2}-3, \ \ \lambda_1^2+6\lambda_1-4L+9\geq 0.$$
These inequalities and (\ref{L}) then reduced to the following
\begin{equation}\label{tL}
	2\sqrt{L}-3\leq \lambda_1\leq \min\left\{{L\over 2}-3, {L\over 4}+1\right\}.
\end{equation}
{\bf Sub-sub-case:} $6+2\lambda_1-L> 0$. In this case the inequality (\ref{see}) has not any solution.

Denote
$$B_2=\left\{(L,\lambda_1)\in \mathbb R^2_+: L>8, 	2\sqrt{L}-3\leq \lambda_1\leq \min\left\{{L\over 2}-3, {L\over 4}+1 \right\}\right\}.$$

Note that for each solution $M_1$ with $M_1\ne M_2$ the value $M_2$ is uniquely determined by $M_2={1\over M_1}$.

 \begin{figure}[h!]
	\includegraphics[width=13.5cm]{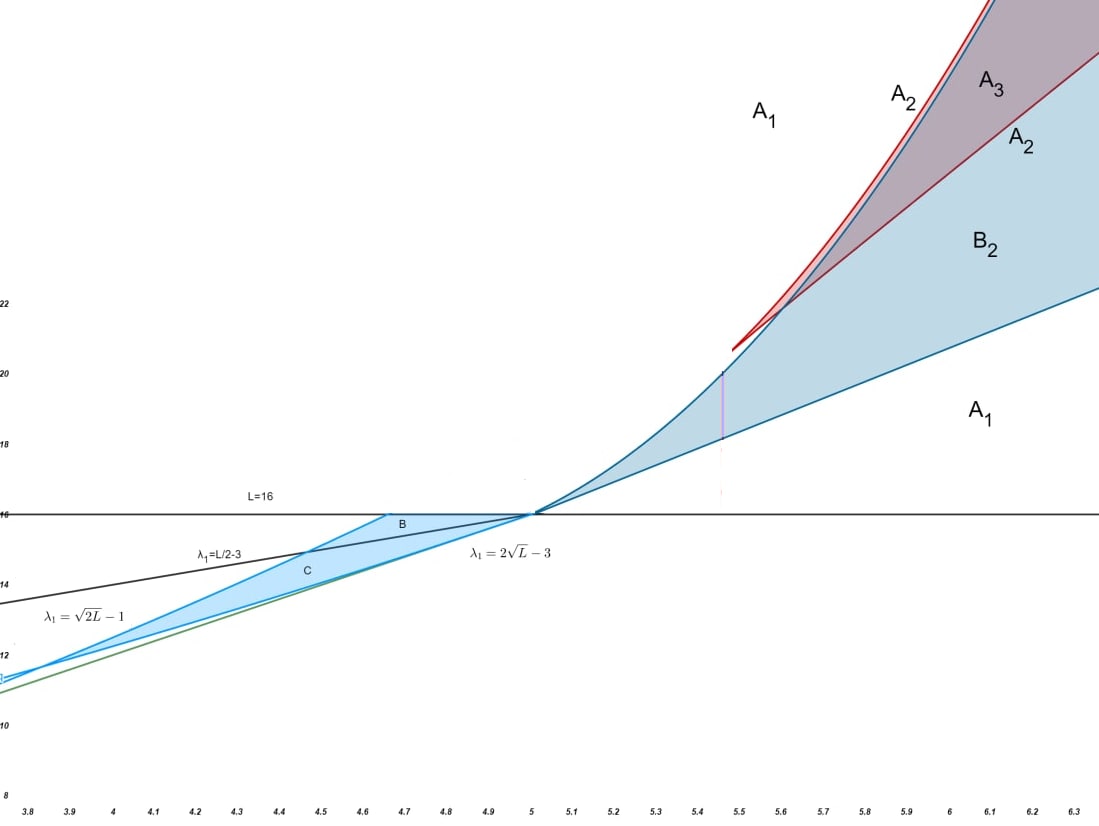}
	\caption{ The sets $A_i$, $i=1,2,3$ and $B$, $C$, $B_2$. }\label{B23}
\end{figure}
Now we summarize results of this section to the following (see Fig. \ref{B23})

\begin{thm}\label{tk} Let $\mathcal N$ be the number of fixed points of the operator (\ref{e4o}). Then
$$\mathcal N=\left\{\begin{array}{lllllllll}
	1, \ \ \mbox{if} \ \ (L,\lambda_1)\in A_1\setminus (B_1\cup B_2)\\[2mm]
	2, \ \ \mbox{if} \ \ (L,\lambda_1)\in A_2\setminus B_2\\[2mm]
	3,	\ \ \mbox{if} \ \ (L,\lambda_1)\in B_1\\[2mm]
	3,	\ \ \mbox{if} \ \ (L,\lambda_1)\in B_2\setminus (A_2\cup A_3)\\[2mm]
    3,	\ \ \mbox{if} \ \ (L,\lambda_1)\in A_3\setminus B_2\\[2mm]
    3, \ \ \mbox{if} \ \ (L,\lambda_1)\in A_1\cap B_1\\[2mm]
    3, \ \ \mbox{if} \ \ (L,\lambda_1)\in A_1\cap B_2\\[2mm]
    4, \ \ \mbox{if} \ \ (L,\lambda_1)\in A_2\cap B_2\\[2mm]
    5, \ \ \mbox{if} \ \ (L,\lambda_1)\in A_3\cap B_2
    \end{array}\right.$$
\end{thm}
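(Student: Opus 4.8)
The plan is to reduce the problem to counting positive solutions of the two-dimensional system (\ref{ef4}): by (\ref{e4}) these solutions $(M_1,M_2)$ are in bijection with the fixed points of (\ref{e4o}), so throughout I work with (\ref{ef4}) under the standing hypothesis $L_1=L_2=L$. The first move is the factorization (\ref{MM}) obtained by subtracting the two equations: it splits the solution set into the diagonal locus $\{M_1=M_2\}$ and the locus (\ref{M2}). I would enumerate each branch separately and then take the union, checking that the two branches meet only along the degenerate point $M_1=M_2=1$.

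On the diagonal branch I set $M_1=M_2=M$ and reduce to the scalar equation $a=f(M)$, where $f(x)=x^{-1}\big((1+x)/(b+x)\big)^2$, $a=4/L$ and $b=(1+\lambda_1)/2$. The sign analysis of $f'$ gives a clean dichotomy: for $b\le 9$ the function is strictly decreasing and $a=f(M)$ has a single root, whereas for $b>9$ the two critical points $x_{1,2}$ produce one local maximum and one local minimum, so the number of roots passes through $1,2,3$ as $a$ crosses the levels $f(x_2)$ and $f(x_1)$. Translating these level-crossings back into $(L,\lambda_1)$ yields precisely the partition into $A_1$, $A_2$, $A_3$, carrying $1$, $2$, $3$ diagonal solutions respectively.

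On the off-diagonal branch I substitute $t=M_1+M_2$ into (\ref{M2}), obtaining the quadratic (\ref{Lt}) whose roots $t_1\ge t_2$ are real and strictly positive exactly on the region (\ref{L}). Feeding an admissible $t_i$ back into (\ref{te}) forces the symmetric constraint $M_1M_2=1$, whence $M_1,M_2=\tfrac12\big(t_i\pm\sqrt{t_i^2-4}\big)$; these are real, positive and genuinely off-diagonal exactly when $t_i>2$. Squaring the two inequalities $t_1\ge 2$ and $t_2\ge 2$ after the routine sign split on $6+2\lambda_1-L$ (to keep the inequality direction) is what carves out the regions $B_1$ and $B_2$ of Fig.\ref{B23}.

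The value of $\mathcal N$ is then read off the common refinement of the diagonal partition $\{A_j\}$ and the off-diagonal data $\{B_j\}$: on each nonempty cell one adds the diagonal multiplicity ($1$, $2$ or $3$) to the off-diagonal contribution coming from the admissible roots $t_i$. I expect the real obstacle to be exactly this assembly, and three points demand care. First, one must verify that an off-diagonal solution never collides with a diagonal one except on the thin boundary $t_i=2\leftrightarrow M_1=M_2=1$, so that no fixed point is double-counted. Second, one must pin down the incidence geometry of the level curves $a=f(x_1)$, $a=f(x_2)$ (the boundaries of $A_2,A_3$) against the curves $t_1=2$, $t_2=2$ (the boundaries of $B_1,B_2$); this is what decides which cells of the refinement are actually occupied and hence produces Fig.\ref{B23}. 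Third, one must fix the multiplicity assigned to each admissible $t_i>2$, i.e.\ whether the two ordered pairs $(M_1,M_2)$ and $(M_2,M_1)$ count as one fixed point or two, since this choice directly sets the off-diagonal increment in every cell. Once the incidence picture is settled, the piecewise formula for $\mathcal N$ follows by inspection, the remaining work being the same Cardano/Descartes manipulations already carried out for the examples of the previous section.
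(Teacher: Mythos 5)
Your proposal follows essentially the same route as the paper's own argument: the same reduction of fixed points of (\ref{e4o}) to the two-dimensional system (\ref{ef4}), the same factorization (\ref{MM}) into the diagonal branch (scalar analysis of $f$ with $a=4/L$, $b=(1+\lambda_1)/2$, producing $A_1,A_2,A_3$) and the off-diagonal branch (substitution $t=M_1+M_2$, the quadratic (\ref{Lt}), the constraint $M_1M_2=1$, and the sign split on $6+2\lambda_1-L$ producing $B_1,B_2$), followed by adding the two contributions on each cell. The three assembly issues you flag --- non-collision of the branches away from $M_1=M_2=1$, the incidence geometry of the regions, and the ordered-pair convention (the paper counts $(M_1,M_2)$ and $(M_2,M_1)$ as two distinct fixed points, which is what makes $B_1$ contribute $2$ and yields $\mathcal N=3$ there) --- are exactly the points that the paper itself settles only by inspection of the plotted regions in Fig.~\ref{B23} rather than by a separate argument, so your plan matches the published proof both in structure and in its level of detail.
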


\section{Application: Gibbs measures}
In this section we give an application of the above mentioned
results to construction of translation invariant Gibbs measures of spin systems defined on Cayley trees. 

{\bf Set-up.} Let us give the basic concepts for Gibbs
measures on a Cayley tree, and also fix some notation. 

A Cayley tree $\Gamma^{k}=(V, L)$  of order $k\geq 1$ is a graph without cycles and each its vertex has exactly $k+1$ edges. Here
 $V$ is the set of vertices of $\Gamma^{k}$ and $L$ is the set of its edges. If $l \in L$ then its endpoints $x, y \in V$ are called nearest neighbors and denoted by $l=\langle x, y\rangle$.

Let $d(x, y)$ be the distance between vertices $x$ and $y$ on the Cayley tree, i.e., number of edges of the shortest path connecting vertices $x$ and $y$.

For a fixed $x^{0} \in V$ we put
$$
W_{n}=\left\{x \in V \mid d\left(x, x^{0}\right)=n\right\}.
$$

 If $x \in W_{n}$ then the set $S(x)$ of direct successors of the vertex $x$ is
$$
S(x)=\left\{y_{i} \in W_{n+1} \mid d\left(x, y_{i}\right)=1, i=1,2, \ldots, k \right\}.
$$

For the HC-model with a countable number of states on the Cayley tree define configuration $\sigma=\{\sigma(x) \mid x \in V\}$ as a function from $V$ to the set of natural numbers $\mathbb{N}$.

Consider the set $\mathbb{N}$ as the set of vertices of some infinite graph $G$. Using the graph $G$. A  configuration $\sigma$ is called $G$-admissible on a Cayley tree if $\{\sigma(x), \sigma(y)\}$  is an edge of the graph $G$ for any nearest neighbors $x, y$ from $V$.

The set of $G$-admissible configurations is denoted by $\Omega^{G}$.

The activity set for the graph $G$ is the bounded function $\lambda: G \mapsto \mathbb{R}_{+}$ (where $\mathbb{R}_{+}$ is the set of positive real numbers).

Define the Hamiltonian of $G-$ HC-model  as
\begin{equation}\label{g1}
	H_{G}^{\lambda}(\sigma)= \begin{cases} \sum\limits_{x \in V} \ln \lambda_{\sigma(x)}, & \text { if } \sigma \in \Omega^{G}, \\ +\infty, & \text { if } \sigma \notin \Omega^{G}.\end{cases}
\end{equation}

The set of edges of the graph $G$ is denoted by  $L(G)$. Denote by $A \equiv A^{G}=\left(a_{i j}\right)_{i, j \in \mathbb N}$  the adjacency matrix of $G$, i.e.,

$$
a_{i j}=a_{i j}^{G}=\left\{\begin{array}{lll}
1 & \text { if } & \{i, j\} \in L(G), \\
0 & \text { if } & \{i, j\} \notin L(G).
\end{array}\right.
$$

\begin{defn}\label{Definition 3} (see \cite{Z1} and Chapter 12 of \cite{Ge}) A family of vectors $l=\left\{l_{xy}\right\}_{\langle x, y) \in L}$ with $l_ {xy}=$ $\left\{l_{xy}(i): i \in \mathbb{N}\right\} \in(0, \infty)^{\mathbb{N}}$ is called the boundary law for the Hamiltonian (\ref{g1}) if

1) for each $\langle x, y\rangle \in L$ there exists a constant $c_{x y}>0$ such that the consistency equation
\begin{equation}\label{5}
l_{xy}(i)=c_{xy}  \prod_{z \in \partial x \backslash\{y\}} \sum_{j \in \mathbb{N}} \lambda_i a_{ij}\lambda_{j} l_{zx}(j)
\end{equation}
holds for any $i \in \mathbb{N}$, where $\partial x$ is the set of nearest neighbors of $x$.

2) The boundary law $l$ is said to be normalisable if and only if
\begin{equation}\label{6}
\sum_{i \in \mathbb{N}}\left(\prod_{z \in \partial x} \sum_{j \in \mathbb{N}}\lambda_i a_{ij}\lambda_{j} l_{zx} (j)\right)<\infty
\end{equation}
for all $x \in V$.
\end{defn}
For given configuration $\omega$, graph $G$ with $A=(a_{ij})$, an edge $b=\langle x,y \rangle$, and $i=\omega(x)$, $j=\omega(y)$ define transfer matrices $Q_b$ by
\begin{equation}\label{Qd}
Q_b(i,j) =\lambda_ia_{ij}\lambda_j.
\end{equation}

Let $\omega_b=\{\omega(x), \omega(y)\},$ when $b=\langle x, y \rangle$.

For a finite subset $\Lambda\subset V$ define the (Markov) Gibbsian specification as
$$
\gamma_\Lambda(\sigma_\Lambda = \omega_\Lambda | \omega) = (Z_\Lambda)(\omega)^{-1} \prod_{b \cap \Lambda \neq \emptyset} Q_b(\omega_b).
$$

\begin{thm}\label{tz} \cite{Z1}
For any Gibbsian specification $\gamma$ with associated family of transfer matrices $(Q_b)_{b \in L}$  we have
\begin{enumerate}
\item Each {\it normalisable} boundary law $(l_{xy})_{x,y}$ for $(Q_b)_{b \in L}$ defines a unique Gibbs measure $\mu$ (corresponding to $\gamma$)  via the equation given for any connected set $\Lambda \subset V$
\begin{equation}\label{BoundMC}
	\mu(\sigma_{\Lambda \cup \partial \Lambda}=\omega_{\Lambda \cup \partial \Lambda}) = (Z_\Lambda)^{-1} \prod_{y \in \partial \Lambda} l_{y y_\Lambda}(\omega(y)) \prod_{b \cap \Lambda \neq \emptyset} Q_b(\omega_b),
\end{equation}
where for any $y \in \partial \Lambda$, $y_\Lambda$ denotes the unique nearest-neighbor of $y$ in $\Lambda$.
\item Conversely, every Gibbs measure $\mu$ admits a representation of the form (\ref{BoundMC}) in terms of a {\it normalisable} boundary law (unique up to a constant positive factor).
\end{enumerate}
\end{thm}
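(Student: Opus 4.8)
The plan is to prove the two assertions separately, using throughout that $\Gamma^k$ is a tree: deleting any vertex disconnects the subtrees meeting it, so every nearest-neighbour Gibbs measure on $\Gamma^k$ is a Markov field, and the transfer matrices $Q_b(i,j)=\lambda_i a_{ij}\lambda_j$ of (\ref{Qd}) factorise along edges.

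For assertion (1), I would first invoke the normalisability condition (\ref{6}) to guarantee that for every finite connected $\Lambda$ the constant $Z_\Lambda$ in (\ref{BoundMC}) is finite, so that the right-hand side is a genuine probability distribution $\mu_\Lambda$ on the spins in $\Lambda\cup\partial\Lambda$. The decisive step is the Kolmogorov consistency of the family $\{\mu_\Lambda\}$: for $\Lambda\subset\Lambda'$ one sums the $\Lambda'$-distribution over the spins on $\Lambda'\setminus\Lambda$ and must recover $\mu_\Lambda$. Since the tree has no cycles, this summation factorises over the direct successors of each vertex of $\partial\Lambda$, and it is exactly the consistency equation (\ref{5}) that lets one collapse each resulting sum $\sum_{j}\lambda_i a_{ij}\lambda_j l_{zx}(j)$ into a single boundary-law factor $l_{yy_\Lambda}(\omega(y))$, the constants $c_{xy}$ being absorbed into the normalisers. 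Once consistency holds, the Kolmogorov extension theorem produces a unique $\mu$ on $\Omega$, and checking that $\mu$ obeys the DLR equations for $\gamma$ is then immediate, because by construction the conditional law of $\mu$ inside $\Lambda$ given the exterior reproduces $\gamma_\Lambda$.

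For assertion (2), given a Gibbs measure $\mu$ I would reconstruct a boundary law from its finite-volume conditionals. Fixing a root $x^0$ and orienting each edge $\langle x,y\rangle$ so that $y$ is the neighbour closer to $x^0$, I would set $l_{xy}(i)$ proportional to the conditional $\mu$-probability that $\sigma(x)=i$ given a fixed admissible configuration on the component beyond $y$, normalising to cancel the activities $\lambda_i$. The tree Markov property ensures both that this quantity is independent of the conditioning past $y$ and that it satisfies the recursion (\ref{5}); that $\mu$ is a probability measure forces the normalisability (\ref{6}). Uniqueness up to a positive scalar follows by observing that two boundary laws generating the same distributions through (\ref{BoundMC}) must have edgewise proportional coordinates, with the proportionality factors pinned to a single global constant by (\ref{5}).

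The step I expect to be the main obstacle is the consistency verification in (1) together with its mirror image in (2): the careful bookkeeping that shows a sum over the freshly exposed boundary spins, weighted by the transfer matrices, is reabsorbed into the boundary-law factors without disturbing normalisation. This is where the tree combinatorics and the precise shape of (\ref{5}) must interlock exactly, and where the countability of the spin set $\mathbb{N}$ demands the Fubini and dominated-convergence arguments that condition (\ref{6}) is tailored to justify.
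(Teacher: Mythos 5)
First, a point of comparison: the paper contains no proof of Theorem \ref{tz} at all --- it is quoted from \cite{Z1} (see also Chapter~12 of \cite{Ge}), so your sketch can only be measured against Zachary's original argument. For assertion~(1) your outline does match that argument: finiteness of $Z_\Lambda$ from the normalisability condition (\ref{6}), Kolmogorov consistency of the finite-volume family in which the consistency equation (\ref{5}) collapses the sums over freshly exposed spins into boundary-law factors (with the $c_{xy}$ absorbed into normalisers), extension, then the DLR check. That part is sound and is essentially the standard proof.

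Assertion~(2) is where there is a genuine gap. You define $l_{xy}(i)$ as the conditional probability that $\sigma(x)=i$ given a \emph{fixed} admissible configuration on the component beyond $y$, and claim that ``the tree Markov property ensures this quantity is independent of the conditioning past $y$.'' That independence is not the (local) Markov property of a nearest-neighbour Gibbs measure; it is the much stronger splitting (tree-indexed Markov chain) property, and a general Gibbs measure does not have it. Concretely, if $\mu_1\neq\mu_2$ arise via (\ref{BoundMC}) from two non-proportional boundary laws, then $\mu=\frac{1}{2}(\mu_1+\mu_2)$ is again a Gibbs measure for the same specification, but its conditional law at $x$ genuinely depends on the conditioning beyond $y$, and $\mu$ admits no single boundary-law representation --- so your construction breaks at exactly this step, and indeed assertion~(2) read literally for \emph{all} Gibbs measures is an over-statement inherited from the paper's loose quotation. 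Zachary's actual theorem (cf.\ Theorem~12.12 of \cite{Ge}) establishes a bijection between normalisable boundary laws modulo positive constants and \emph{Markov chains} on the tree, i.e.\ splitting Gibbs measures; the boundary law is reconstructed there from the marginals of $\mu$ along an exhaustion (a backward-martingale-type argument), not from conditionings on fixed exterior configurations. To repair your sketch you would need either to restrict to splitting Gibbs measures (every extreme Gibbs measure of a Markov specification on a tree is one, by Theorem~12.6 of \cite{Ge}) or to replace the fixed-configuration conditioning by that marginal-based construction; the uniqueness-up-to-a-constant claim then goes through as you describe.
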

Denote $\mathbb{N}_1=\mathbb{N}\setminus\{1\}$, $\hat l_{zx} (j): =\lambda_{j} l_{zx} (j)$ and assume $\hat l_{x y}(1) \equiv 1$, then from (\ref{5}) (denoting $\lambda^k_i$ by $\lambda_i$) we obtain
\begin{equation}\label{7}
\hat l_{xy}(i)=\frac{\lambda_{i}}{\lambda_{1}} \prod_{z \in \partial x \backslash\{y\}} \frac{a_{i 1}+ \sum_{j \in \mathbb{N}_{1}} a_{ij} \hat l_{zx}(j)}{a_{11}+\sum_{j \in \mathbb{N}_{1}} a_{1 j}\hat l_{zx}(j)}.
\end{equation}

In this section we consider {\bf concrete graph} $G$ defined by the adjacency matrix (\ref{GG}), which we considered in the previous section.

Given a boundary law $\hat l_{xy}(i)$, we define $z_{i,x}=\hat l_{xy}(i)$ when $x$ is  direct successor of $y$, i.e. $x\in S(y)$, then (\ref{7})  can be written as (here without lost of generality we take $\lambda_1=1$).

\begin{equation}\label{8}
	z_{i,x}=\lambda_{i}\prod_{y \in S(x)}
	\frac{a_{i 1}+ \sum_{j \in \mathbb{N}_{1}} a_{ij} z_{j,y}}{a_{11}+\sum_{j \in \mathbb{N}_{1}} a_{1 j} z_{j,y}}.
\end{equation}
Thus the investigation of the Gibbs measures for Hamiltonian (\ref{g1}) for the graph $G$ given by matrix $A=(a_{ij})$ is reduced to finding solutions of (\ref{8}).

We give Gibbs measures corresponding to solutions mentioned in Theorem \ref{tk}. To do this we should first check normalisablity of solutions mentioned in this theorem. 

{\bf Normalisablity of solution (\ref{e4}).}

\begin{lemma}\label{new} If $\lambda\in l^{1}_{+}$ then any solution of the form (\ref{e4}) is normalisable.\end{lemma}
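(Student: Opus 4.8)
The plan is to verify the normalisability criterion (\ref{6}) of Definition \ref{Definition 3} directly for the boundary law attached to a solution of (\ref{e4}). First I would recall that a solution $x=(x_i)$ of (\ref{e4}) produces, through the reduction (\ref{7})--(\ref{8}) for the graph (\ref{GG}), a translation invariant boundary law $\hat l(i)$; since the tree is $k$-regular every vertex has $k+1$ neighbours and all factors in (\ref{6}) coincide, so the criterion collapses to the single scalar requirement
$$\sum_{i\in\mathbb N}\lambda_i^{k+1}\Big(\sum_{j\in\mathbb N}a_{ij}\hat l(j)\Big)^{k+1}<\infty.$$
Writing $S_i:=\sum_{j\in\mathbb N}a_{ij}\hat l(j)$, it then suffices to bound the inner sums $S_i$ uniformly and to compare the series with $\sum_i\lambda_i^{k+1}$.

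The second step is an a priori bound. Because $a_{ij}\in\{0,1\}$ and $a_{1j}\equiv1$, every $S_i$ is dominated by $S_1=\sum_{j}\hat l(j)$. From the consistency relation (\ref{8}) one has $\hat l(i)=\lambda_i\,(S_i/S_1)^k$, and since $S_i\le S_1$ this gives $\hat l(i)\le\lambda_i$, exactly as in the proof of Lemma \ref{lemma1}. Summing over $i$ and using $\lambda\in\ell^1_+$ yields $S_1=\hat l(1)+\sum_{j\ge2}\hat l(j)\le 1+\|\lambda\|<\infty$, so that $S_i\le S_1<\infty$ uniformly in $i$. The parity structure of (\ref{GG}) makes this transparent: $S_i$ is actually constant on each parity class, equal to $\sum_j\hat l(2j)$ for even $i$ and to $\sum_j\hat l(2j+1)$ for odd $i\ge3$, and both of these numbers are finite for the same reason.

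With the uniform bound in hand the series is dominated,
$$\sum_{i\in\mathbb N}\lambda_i^{k+1}S_i^{k+1}\le S_1^{\,k+1}\sum_{i\in\mathbb N}\lambda_i^{k+1},$$
and it remains to note that $\lambda\in\ell^1_+$ forces $\lambda_i\to0$, hence $\lambda_i^{k+1}\le\lambda_i$ for all large $i$; therefore $\sum_i\lambda_i^{k+1}\le C+\|\lambda\|<\infty$. This establishes (\ref{6}) and shows the boundary law is normalisable.

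The main obstacle is the bookkeeping in the first step: pinning down the precise scalar form of (\ref{6}) for this model — tracking the one factor per neighbour that produces the exponent $k+1$, and correctly accounting for the normalisation $\hat l(1)=1$ together with the special row and column $a_{i1}=0$ $(i\ne1)$, $a_{1j}=1$ — so that the reduction to $\sum_i\lambda_i^{k+1}S_i^{k+1}$ and the parity description of $S_i$ are valid. Once the series is in that shape, convergence is immediate; the only quantitative inputs are the bound $\hat l(i)\le\lambda_i$ (from $S_i\le S_1$) and the finiteness of the two parity sums, both of which follow from $\lambda\in\ell^1_+$.
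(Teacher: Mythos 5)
Your proof is correct, but it runs along a different track than the paper's. You verify the normalisability condition (\ref{6}) of Definition \ref{Definition 3} directly: for a translation-invariant boundary law all $k+1$ factors in the product over $\partial x$ coincide, so the condition collapses to $\sum_i\lambda_i^{k+1}S_i^{k+1}<\infty$ with $S_i=\sum_j a_{ij}\hat l(j)$, which you control through the uniform bound $S_i\le S_1\le 1+\|\lambda\|$ (a consequence of $a_{1j}\equiv 1$ and of the domination $\hat l(i)\le\lambda_i$ read off from the fixed point equation) together with $\sum_i\lambda_i^{k+1}<\infty$. The paper instead invokes the reduction cited from \cite{HRo}, by which normalisability follows from the bilinear edge condition $\sum_{i,j}z_{x,i}Q(i,j)z_{y,j}<\infty$; bounding $a_{ij}\le 1$ splits this into $\bigl(\sum_i\lambda_i z_i\bigr)^2$, and the same domination $z_i\le\lambda_i$ gives $\sum_i\lambda_i z_i\le\lambda_1+\sum_{i\ge 2}\lambda_i^2<\infty$. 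The quantitative heart is identical in both arguments: the solution (\ref{e4}) is dominated coordinatewise by $\lambda$, and $\ell^1$ summability of $\lambda$ forces the relevant series (yours with exponent $k+1$, the paper's with exponent $2$) to converge. What your route buys is self-containedness --- you never need the equivalence from \cite{HRo}, only the definition; what the paper's route buys is a shorter computation free of the $(k+1)$-fold product bookkeeping. One caveat applies to both proofs equally: the passage from (\ref{5}) to (\ref{8}) renames powers of the activities (a power of $\lambda_i$ is rebaptized $\lambda_i$), so the $\lambda$ in the hypothesis of the lemma and the $\lambda$ appearing in criterion (\ref{6}) are not literally the same sequence; since $\lambda\in\ell^1_+$ implies summability of all higher powers, the conclusion is unaffected under either reading, but neither you nor the paper makes this identification explicit.
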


\begin{proof} Recall $Q(i,j):=\lambda_ia_{ij}\lambda_j$. The normalisablity of boundary laws can be reduced (see \cite{HRo}) to show that
	$$\sum_{i\in\mathbb{N}}\sum_{j\in\mathbb{N}}z_{x, i}Q(i,j)z_{y, j}<\infty, \ \ \forall \langle x, y\rangle \in L.$$
Now we check this condition for solution (\ref{e4}):
\begin{equation}\label{uf}	\sum_{i\in\mathbb{N}}\sum_{j\in\mathbb{N}}z_{x, i}Q(i,j)z_{y, j}=\sum_{i\in\mathbb{N}}\sum_{j\in\mathbb{N}}\lambda_i a_{ij}\lambda_{j}z_{x,i}z_{y,j}\leq \sum_{i\in\mathbb{N}}\lambda_i z_{x,i} \sum_{j\in\mathbb{N}}\lambda_{j}z_{y,j}.
	\end{equation}
	Since the solution is independent on the vertices of the Cayley tree,  for the RHS of (\ref{uf}) we have:
$$\sum_{i\in\mathbb{N}}\lambda_i z_{x,i} \sum_{j\in\mathbb{N}}\lambda_{j}z_{y,j}=\left(\sum_{i\in\mathbb{N}}\lambda_i z_{i}\right)^2.$$
	Therefore, the lemma follows from the following estimate (in the case of (\ref{e4})):
	
	$$\sum_{i\in\mathbb{N}}\lambda_i z_{i}=\lambda_1+
	\sum_{n\in \mathbb{N}}\lambda_{2n}^2\left(\dfrac{1+M_2}{1+\lambda_1+M_1+M_2}\right)^2$$ $$
	+\sum_{n\in \mathbb{N}}\lambda_{2n+1}^2\left(\dfrac{1+M_1}{1+\lambda_1+M_1+M_2}\right)^2
	< \lambda_1+\sum_{i\in\mathbb{N}_1} \lambda^{2}_i<+\infty.$$
Because, if $\lambda\in l^{1}_{+}$ then $\sum_{i\in\mathbb{N}_1}\lambda_i^2<\infty$.
\end{proof}

Now using Lemma \ref{new}, by Theorem \ref{tz} we conclude that each solution (\ref{e4}) defines a  (translation-invariant) Gibbs measure. Therefore as a corollary of Theorem \ref{tk} we get the following
\begin{thm}\label{to} 
Let $\mathcal{N}_G$ be the number of translation invariant Gibbs measures for Hamiltonian (\ref{g1}), corresponding to graph $G$ defined by (\ref{GG}), then  
$$\mathcal N_G=\left\{\begin{array}{lllllllll}
	1, \ \ \mbox{if} \ \ (L,\lambda_1)\in A_1\setminus (B_1\cup B_2)\\[2mm]
	2, \ \ \mbox{if} \ \ (L,\lambda_1)\in A_2\setminus B_2\\[2mm]
	3,	\ \ \mbox{if} \ \ (L,\lambda_1)\in B_1\\[2mm]
	3,	\ \ \mbox{if} \ \ (L,\lambda_1)\in B_2\setminus (A_2\cup A_3)\\[2mm]
	3,	\ \ \mbox{if} \ \ (L,\lambda_1)\in A_3\setminus B_2\\[2mm]
	3, \ \ \mbox{if} \ \ (L,\lambda_1)\in A_1\cap B_1\\[2mm]
	3, \ \ \mbox{if} \ \ (L,\lambda_1)\in A_1\cap B_2\\[2mm]
	4, \ \ \mbox{if} \ \ (L,\lambda_1)\in A_2\cap B_2\\[2mm]
	5, \ \ \mbox{if} \ \ (L,\lambda_1)\in A_3\cap B_2
\end{array}\right.$$
\end{thm}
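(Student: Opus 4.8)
The plan is to obtain this theorem as a direct consequence of Theorem \ref{tk}, Lemma \ref{new} and Theorem \ref{tz}, by showing that the translation invariant Gibbs measures associated with the graph $G$ of (\ref{GG}) are in one-to-one correspondence with the fixed points of the operator (\ref{e4o}) counted in Theorem \ref{tk}.

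First I would make precise the identification between fixed points and boundary laws. A translation invariant boundary law is one whose value $\hat l_{xy}(i)$ is independent of the edge $\langle x,y\rangle$; writing $z_{i,x}\equiv z_i$ in the consistency equation (\ref{8}) and using that $|S(x)|=k$ is constant on $\Gamma^{k}$, the equation collapses (for $k=2$ and under the normalization $\hat l_{xy}(1)\equiv 1$) to the fixed point system (\ref{e4}) of the operator (\ref{e4o}). Thus the vertex-independent boundary laws for $G$ are exactly the fixed points enumerated in Theorem \ref{tk}.

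Next I would invoke Lemma \ref{new}: since $\lambda\in l^{1}_{+}$, every solution of the form (\ref{e4}) is normalisable, i.e. satisfies (\ref{6}). Hence each of the $\mathcal N$ fixed points yields a genuine, normalisable, translation invariant boundary law, to which Theorem \ref{tz}(1) assigns a unique Gibbs measure through (\ref{BoundMC}).

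It remains to argue that this assignment is a bijection onto the translation invariant Gibbs measures, which is the step that needs the most care. By Theorem \ref{tz}(2) every Gibbs measure comes from a normalisable boundary law that is unique up to a positive multiplicative constant; the normalization $\hat l_{xy}(1)\equiv 1$ removes precisely this scalar freedom, and translation invariance of the measure corresponds to vertex-independence of the boundary law. Granting these two matchings --- that the scalar quotient of Theorem \ref{tz}(2) is exactly the normalization we imposed, and that translation invariant measures correspond to, and only to, the vertex-independent solutions --- distinct fixed points produce distinct measures and no measure is over- or under-counted. Therefore $\mathcal N_G=\mathcal N$, and substituting the nine cases of Theorem \ref{tk} gives the claimed formula. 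The main obstacle is exactly this bookkeeping of the correspondence, rather than any new estimate.
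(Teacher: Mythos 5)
Your proposal is correct and follows essentially the same route as the paper: the paper likewise derives Theorem \ref{to} directly from Theorem \ref{tk} (the fixed-point count), Lemma \ref{new} (normalisability of solutions of the form (\ref{e4})), and Theorem \ref{tz} (normalisable boundary laws $\leftrightarrow$ Gibbs measures), after reducing the consistency equation (\ref{5}) to the translation-invariant system (\ref{8}). In fact you are somewhat more careful than the paper, which states the conclusion in two sentences without spelling out the injectivity/surjectivity bookkeeping (the scalar normalization $\hat l_{xy}(1)\equiv 1$ and the correspondence between translation invariance of the measure and vertex-independence of the boundary law) that you make explicit.
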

\begin{rk} Lemma \ref{new} can be proved for the case of uniqueness of the fixed point, for operator $F$, mentioned in the previous sections, therefore the corresponding Hamiltonian (\ref{g1}) has unique translation invariant Gibbs measure.
\end{rk}


\section*{Statements and Declarations}

{\bf	Conflict of interest statement:}
On behalf of all authors, the corresponding author (U.A.Rozikov) states that there is no conflict of interest.

\section*{Data availability statements}
The datasets generated during and/or analysed during the current study are available from the corresponding author on reasonable request.


\begin{thebibliography}{99}
	\bibitem{BiKo}  M.\ Biskup and  R. \ Koteck\'{y}: Phase coexistence of gradient Gibbs states, {\em Probab. Theory Related Fields}, \textbf{139}(1-2) (2007), 1--39.
	
	\bibitem{BEvE} R. \ Bissacot, E. \ O.\ Endo and A.\ C. \ D. van Enter: Stability of the phase transition of critical-field Ising model on Cayley trees under inhomogeneous external fields, {\em Stoch. Process. Appl.} \textbf{127}(12)  (2017), 4126--4138.
	
	\bibitem{BR} L. \ V. \   Bogachev and U. \ A. \ Rozikov: On the uniqueness of Gibbs measure in the Potts model on a Cayley
	tree with external field. {\em J. Stat. Mech. Theory Exp.}  (2019), no. 7, 073205, 76 pp.
	
	\bibitem{FV} S. \ Friedli and Y. Velenik, {\em Statistical mechanics of lattice systems. A concrete mathematical introduction}, Cambridge University Press, Cambridge, 2018. xix+622 pp.
	
	\bibitem{FS} T. Funaki, H. Spohn,  Motion by mean curvature from the	Ginzburg-Landau $\nabla \Phi$ interface model. {\em Comm. Math. Phys.} \textbf{185}(1), (1997) 1-36.
	
	\bibitem{GRK} N.N. Ganikhodjaev, U.A. Rozikov, N.M. Khatamov,  Gibbs measures for the HC-Blum-Capel model with a countable number of states on the Cayley tree. {\em Theor. Math. Phys.} \textbf{211}(3), (2022) 856-865.
	
	\bibitem{Ge} H.\ O. \ Georgii: {\em Gibbs Measures and Phase Transitions},  Second edition. de Gruyter Studies in Mathematics, 9. Walter de Gruyter, Berlin, 2011.
	
	\bibitem{HR}  F.H. Haydarov, U.A. Rozikov, Gradient Gibbs measures of a SOS model on Cayley trees: 4-periodic  boundary laws. 	{\em Reports on Mathematical Physics}, \textbf{90}(1), (2022) 81-101.
	
	\bibitem{HRo}  F.H. Haydarov, U.A. Rozikov, A HC model with countable set of spin values: uncountable set of Gibbs measures. arXiv:2206.06333
	
	\bibitem{HKLR} F. \ Henning, C.\ K\"ulske, A.\ Le Ny and U.\ A. \ Rozikov: Gradient Gibbs measures for the SOS model with countable values on a Cayley tree,
	{\em Electron. J. Probab.} \textbf{24} (2019), Paper No. 104, 23 pp.
	
	\bibitem{HKa} F.\ Henning and C.\ K\"ulske: Existence of gradient Gibbs measures on regular trees which are not translation invariant,	arXiv:2102.11899 [math.PR]
	
	\bibitem{HKb} F. \ Henning and C.\ K\"ulske: Coexistence of localized Gibbs measures and delocalized gradient Gibbs measures on trees. {\em Ann. Appl. Probab.} \textbf{31}(5) (2021),  2284-2310.
	
	\bibitem{Hphd} F. \ Henning, Gibbs measures and gradient Gibbs measures on regular trees. {\em PhD thesis.} Ruhr-University, Bochum, 2021. 109 pages.
	
	\bibitem{KS} C. \ K\"ulske and P.\ Schriever: Gradient Gibbs measures and fuzzy transformations on trees, {\em Markov Process. Relat. Fields}, \textbf{23}, (2017), 553-590.
	
	\bibitem{K} C. \ K\"ulske: Stochastic Processes on Trees. 2017. Lecture Notes available on https://www.ruhr-uni-bochum.de/imperia/md/content/mathematik/kuelske/stoch-procs-on-trees.pdf
	
	\bibitem{KRK} C. \ K\"ulske, U.A. Rozikov, R.M. Khakimov: Description of the translation-invariant splitting Gibbs measures for the Potts model on a Cayley tree. {\em J. Stat. Phys}. \textbf{156}(1) (2014),  189-200.
	
	
	\bibitem{Ro} U.A. Rozikov: \textit{Gibbs measures on Cayley trees}.  World Sci. Publ. Singapore. 2013.
	
	\bibitem{Roz} U.A. Rozikov: Gibbs measures in biology and physics: The Potts model.  World Sci. Publ. Singapore. 2022.
	
	\bibitem{Ra} U. A. Rozikov: Mirror symmetry of height-periodic gradient Gibbs measures of a SOS model on Cayley trees. 	arXiv:2203.11446 [math-ph]. To appear in {\em Journal of Statistical Physics}.
	
	\bibitem{Sh} S. \ Sheffield: \textit{Random surfaces: Large deviations principles and gradient Gibbs measure classifications.}
	Thesis (Ph.D.)-Stanford University. 2003. 205 pp.
	
	\bibitem{Ve} Y. Velenik, \textit{Localization and delocalization of random interfaces}. Probab. Surv. \textbf{3}, (2006),  112-169.
	
	\bibitem{Z1} S. \ Zachary: Countable state space Markov random fields and Markov chains on trees, {\em  Ann. Probab.} \textbf{11}(4) (1983), 894--903.
	
\end{thebibliography}
\end{document}